\newcommand{\N}{{\mathbb N}}
\newcommand{\T}{\mathbb{T}}
\newcommand{\J}{{\cal J}}
\newcommand{\G}{{\cal G}}
\newcommand{\temp}[1]{\hat{#1}}
\newcommand{\tclocks}{{\textsc{T-Clocks}}\xspace}
\newcommand{\commenttt}[1]{$\rhd$~{\it #1}}
\newcommand{\codetitle}[1]{\medskip\STATE \underline{#1:}\vspace{2pt}}
\title{Building Fastest Broadcast Trees in Periodically-Varying Graphs}
\author{
\institute{}
 Arnaud Casteigts\inst{1}
 \and Paola Flocchini\inst{1}
 \and Bernard Mans\inst{2}
 \and Nicola Santoro\inst{3}
 \institute{University of Ottawa, Ottawa, Canada,\\
 \email{\{casteig,flocchin\}@site.uottawa.ca}
 \and 
 Macquarie University, Sydney, Australia,\\
 \email{bernard.mans@mq.edu.au}
 \and
 Carleton University, Ottawa, Canada,\\
 \email{santoro@scs.carleton.ca}}
}
\begin{document}
\maketitle

\begin{abstract}
  Delay-tolerant networks (DTNs) are characterized by a possible absence of end-to-end communication routes at any instant. Still, connectivity can generally be established over time and space. The optimality of a temporal path (journey) in this context can be defined in several terms, including topological (e.g. {\em shortest} in hops) and temporal (e.g. {\em fastest, foremost}). 
The combinatorial problem of computing shortest, foremost, and fastest journeys {\em given full knowledge} of the network schedule was addressed a decade ago (Bui-Xuan {\it et al.}, 2003). A recent line of research has focused on the distributed version of this problem, where foremost, shortest or fastest {\em broadcast} are performed without knowing the schedule beforehand.
In this paper we show how to build {\em fastest} broadcast trees ({\it i.e.,} trees that minimize the global duration of the broadcast, however late the departure is) in Time-Varying Graphs where intermittent edges are available periodically (it is known that the problem is infeasible in the general case even if various parameters of the graph are know). We address the general case where contacts between nodes can have arbitrary durations and thus fastest routes may consist of a mixture of {\em continuous} and {\em discontinuous} segments (a more complex scenario than when contacts are {\em punctual} and thus routes are only discontinuous). Using the abstraction of \tclocks to compute the temporal distances, we solve the fastest broadcast problem by first  learning, at the emitter, what is its time of {\em minimum temporal eccentricity} ({\em i.e.} the fastest time to reach all the other nodes), and second by building a {\em foremost} broadcast tree relative to this particular emission date.
\end{abstract}

\section{Introduction}

Highly-dynamic networks, in particular Delay-Tolerant Networks, are characterized by a possible absence of end-to-end communication routes (paths or {\em direct journeys}) at any time. In most cases, however, communication can still be achieved over time and space using store-carry-forward-like mechanisms over disconnected routes ({\em indirect journeys}). This singularity led researchers to develop a range of routing techniques based for example on prior knowledge of the dynamics~\cite{BFJ03,JFP04}, probabilistic information~\cite{LDS03}, encounter-based strategies~\cite{DubGV03,GroV03,JLW07}, or contact statistics~\cite{KMR05}. On the other hand, considering the time-dimension led to the extension of many graph theoretic and analytical concepts including paths and reachability~\cite{Berman96,KKK00}, distance~\cite{BFJ03,JMR10}, diameter~\cite{ChMMD08}, tree width~\cite{MM11}, connectivity~\cite{AE84,BF03}, or necessary conditions~\cite{CCF09,KLO10}. Of particular interest in this paper are the concepts of {\em journeys} (temporal paths) and {\em temporal distance}. 

The fact that connectivity takes place over time implies that the optimality of a given journey does not depend anymore on the sole number of hops separating the nodes. Short journeys might have a long duration, and long ones be comparably fast. In fact, at least three optimality metrics could be considered for a journey, as pointed out in~\cite{BFJ03}: being {\em shortest} (least number of hops), {\em foremost} (earliest arrival date), or {\em fastest} (minimum time spent between departure and arrival, however late the departure is). The problem of computing shortest, foremost, and fastest journeys was addressed in~\cite{BFJ03} in the context of centralized algorithms using full prior knowledge of the network schedule. 

In a recent line of research~\cite{CFMS10,CFMS11}, the authors started to look at the {\em distributed} variant of this problem, namely performing foremost, shortest, and fastest {\em broadcast} in time-varying graphs {\it without} knowledge of the schedule. Specifically, we asked what assumptions could make each of these broadcasts possible, and what others would make the broadcast trees reusable for subsequent broadcasts. Within the set of assumptions considered, the easiest problem regarding {\em feasibility} turned out to be foremost broadcast, while the easiest regarding {\em reusability} is shortest broadcast~\cite{CFMS10}. Due to its intrinsic features, fastest broadcast remained unfeasible in all the contexts considered, which motivates us to consider stronger assumptions such as {\em periodicity}. The periodic assumption holds in practical scenarios such as transportations {\it e.g.,} where entities have periodic movements (satellites, trains, or buses). Previous works in periodic DTNs include exploration~\cite{FKMS11,FMS09,IW11} or scalable routing~\cite{KerO09,LW09b}.

We are tackling the problem in two steps. First, having the emitter decide {\em when} broadcast has the potential to be the {\em fastest}, then build a {\em foremost} broadcast tree relative to (any of) the corresponding emission date. We show that the latter is trivial, and thus mainly focus on the way the emitter could process the optimal date to initiate a broadcast -- {\em i.e.}, its moment of {\em minimum temporal eccentricity}. This processing is achieved by using a recently introduced primitive, \tclocks, which allows nodes in an arbitrary dynamic network to learn temporal lags a {\it posteriori} (that is, from a {\em receptor-based} perspective). Temporal lags relative to the emitter are thus first learnt by destination nodes ({\it i.e.,} all the other nodes), then aggregated back to the emitter in an opportunistic way so that the emitter learns the complete evolution of its temporal eccentricity over one period $p$. Since the network is periodic, this information suffices to determine the emitter's temporal eccentricity at any time in the future. One of the main assets of our solution is that it addresses the general (and more realistic) case where contacts between nodes can have arbitrarily durations and possibly overlap in time with other contacts. This feature renders temporal distances substantially more complex to process, because they involve the co-existence of continuous and discontinuous routes in the network (a vast majority of DTN-related results assume punctual contacts and thus can only deal with discontinuous routes).

The paper is organized as follows: in Section~\ref{sec:model}, we describe the model and terminology used in the paper. Then Section~\ref{sec:tclocks} present the building block upon which our algorithm relies, that is, the temporal-lags vector clocks (\tclocks). Finally, Section~\ref{sec:fastest} describes the algorithm, proves its correctness, and illustrate it with an example scenario.

\section{Background}
\label{sec:model}

\subsection{Model and assumptions}
Consider a set $V$ of $n$ {\em nodes}, making contacts with each other over a (possibly infinite) time interval ${\cal T} \subseteq \T$, called {\em lifetime} of the system; the temporal domain $\T$ corresponds here to $\mathbb{R}^+$ (continuous-time system). Let the contacts between nodes define a set of intermittently available undirected edges $E\subseteq V^2$ such that $(x,y)\in E \Leftrightarrow x$ and $y$ interact at least once in ${\cal T}$.

Following~\cite{CFQS11}, we represent the network as a {\em Time-Varying Graph} (TVG, for short) $\G=(V,E,{\cal T},\rho, \zeta)$, where $\rho : E \times {\cal T} \rightarrow \{0,1\}$, called {\em presence} function (or {\em schedule}), indicates whether a given edge is available at a given time, and $\zeta \in \T$, called {\em latency}, indicates the time it takes to propagate a message over an edge. In this work, we assume the latency is constant for all edges and presence times (though in general it could be defined as a function) and known to the nodes. For this paper, we consider Periodically-Varying Graphs (PVG, for short) where intermittent edges are available periodically. That is, the schedule $\rho$ is unknown to the nodes, but assumed periodic: $\forall e\in E, \forall t\in {\cal T},\forall k\in \N,  \rho(e,t)=\rho(e,t+kp)$, and the period $p$ is known to the nodes.
If a message is sent less than $\zeta$ time units before the disappearance of an edge, it is lost. Given an edge $e$, we allow the notation $\rho_{[t_1,t_2)}(e)=1$ to signify that $\forall t \in [t_1,t_2), \rho(e,t)=1$ ({\it i.e.,} the edge is continuously present during interval $[t_1,t_2)$).

Given a time-varying graph $\G=(V,E,{\cal T},\rho, \zeta)$, we denote by $G=(V,E)$ its {\em underlying graph}, that is, in a sense, the {\em static} counterpart of $\G$.
A sequence of couples $\J=((e_1,t_1),$ $(e_2,t_2) \dots,$ $(e_k,t_k))$,  where $e_1, e_2,...,e_k$ is a walk in $G$, $ t_i +\zeta  \leq t_{i+1}$ and $\rho_{[t_i,t_i+\zeta)}(e_i)=1$ for all $1\leq i <k$ is called a {\em journey} in $\G$. 
 We denote by $departure(\J)$, and $arrival(\J)$, the starting date $t_1$ and last date $t_k+\zeta$ of $\J$, respectively.
 Journeys can be thought of as {\em paths over time} from a source to a destination and thus have both a {\em topological} and a {\em temporal} lengths.
 The {\em topological length} of $\J$ is the number $|\J|_h= k$ of couples in $\J$ (i.e., number of {\em hops}), and its {\em temporal length} is its duration $|\J|_t  =  arrival(\J) - departure(\J) = t_k - t_1 +\zeta$. For example the journey $((ac,2), (cd,5))$ in Figure~\ref{fig:example} has a topological length of $2$, and a temporal length of $3+\zeta$ time units.

Let us denote by $\J^*_\G$ the set of all journeys in  time-varying graph $\G$, and by  $\J^*_{(u,v)} \subseteq \J^*_\G$ those journeys starting at node $u$ and ending at node $v$.

\begin{figure}[h]
  \centering
  \begin{tikzpicture}[scale=1.6]
    \tikzstyle{every node}=[draw, circle, minimum size=11pt, inner sep=0pt]
    \path (0,0) node (a){$a$};
    \path (a)+(0, 1) node (b){$b$};
    \path (a)+(1,.5) node (c){$c$};
    \path (c)+(1.2,0) node (d){$d$};
    \tikzstyle{every node}=[font=\scriptsize, inner sep=1pt]
    \draw (a)--node[midway, left]{$[1,3)$}(b);
    \draw (a)--node[midway, xshift=-4pt, yshift=-1pt, below right]{$[2,5)$}(c);
    \draw (b)--node[midway, xshift=-4pt, yshift=1pt, above right]{$[0,4)$}(c);
    \draw (c)--node[midway, above]{$[5,6)$}(d);
  \end{tikzpicture}
  \caption{\label{fig:example}A time-varying graph $\G$. ({\it Labels indicate when an edge is present}.)}
\end{figure}
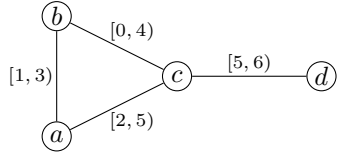

We say that a journey is {\em direct} if the presence of every two successive edges overlap in time and their use follow on directly; it is said {\em indirect} otherwise. An example of direct journey in the graph of Figure~\ref{fig:example}  is $\J_1=\{(ab,2),(bc,2+\zeta)\}$. Examples of indirect ones
  include $\J_{2}=\{(ac,2),(cd,5)\}$, and  $\J_{3}=\{(ab,2),(bc,2+\zeta),(cd,5)\}$. 

We assume nodes can {\em instantly} detect the appearance or disappearance of an incident edge, and associate dedicated operations in reaction to these events. Since the presence intervals are by convention right-open, we assume that the disappearance of an edge at time $t$ is always  handled {\em before} the appearance of another edge at date $t$, which is consistent with looking at journeys whose presence intervals strictly follow each other as indirect ones ({\it e.g.} $\J_{(a,d)}=\{(ac,5$$-$$\zeta), (cd,5)\}$ in the above example). 
Processing times are assumed negligible. 
The entities need not sharing a common global time, but their clocks must advance at a same rate. Finally, we assume nodes have unique identifiers.

\subsection{Definition of the problem}
\label{sec:distances}

As mentioned above, the length of a journey can be measured both in terms of {\em hops} or {\em time}. This gives rise to two distinct definitions of distance in a graph~$\G$:
\begin{list}{\labelitemi}{\leftmargin=.5em}
  \item The {\em topological distance} from a node $u$ to a node $v$ at time $t$, noted 
  $d_{u,t}(v)$, is defined as $Min\{|\J|_h:\J \in \J^*_{(u,v)} \wedge departure(\J) \ge t\}$. For a given date $t$, a journey whose departure is $t'\ge t$ and topological length is equal to $d_{u,t}(v)$ is called {\em shortest} ;
  
  \item The {\em temporal distance} from $u$ to $v$ at time $t$, noted $\hat{d}_{u,t}(v)$ is defined as $Min\{arrival(\J):\J \in \J^*_{(u,v)} \wedge departure(\J)\ge t\}-t$. Given a date $t$, a journey whose departure is $t'\ge t$ and arrival is $t+{\hat d}_{u,t}(v)$ is called {\em foremost}, and one whose departure is $t' \ge t$ and temporal length is $Min\{{\hat d}_{u,t''}(v) : t'' \ge t\}$ is called {\em fastest}.
\end{list}

Informally, a {\em fastest} journey is one that minimizes the time spent between departure and arrival (however late the departure is). This metric is particularly relevant in communication networks whose medium is shared exclusively (e.g. to minimize the holding time), or transportation networks (to minimize a trip duration, even if this implies leaving later). The problem of computing shortest, fastest, and foremost journeys in delay-tolerant networks was solved in~\cite{BFJ03} as a centralized ({\it i.e.,} combinatorics) problem, given complete knowledge of $\G$. We consider a {\em distributed} variant of the problem, namely that of performing fastest broadcast, which we define as follows.

\paragraph{Fastest Broadcast:} Given an emitter $src$ and a message $m$, $m$ should be sent from $src$ to all the nodes in the network in such as way that the global duration between the first message emission (at $src$) and the last message reception (anywhere) is {\em minimized}. We also require that the emitter detects the termination, whether implicitly or explicitly, however this detection does not need to be the fastest.

\subsection{Working with temporal views}

A central concept in this paper is that of {\em temporal view}, introduced in~\cite{KosKW08} in a context of social network analysis. (This concept  was simply called ``view'' in \cite{KosKW08}, but we added the ``temporal'' adjective to avoid confusion since ``view'' has a different meaning in distributed computing, e.g. \cite{YamK96}.) The temporal view a node $v$ has of a node $u$ at time $t$, denoted $\phi_{v,t}(u)$, is the latest (i.e., largest) $t'\le t$ at which a message received by time $t$ at $v$ could have been emitted at $u$; that is, in our formalism,

\begin{center}
  \small
  $\phi_{v,t}(u)=$
  Max$\{departure(\J) : \J \in\J^*_{(u,v)}$ $ \wedge~ arrival(\J)\le t\}$.
\end{center}

There is a clear connexion between temporal distances and temporal views. Both actually refer to the same quantity seen from different perspectives: the temporal distance is a {\em duration} defined locally to an {\em emitter} at an emission date, while the temporal view is a {\em date} defined locally to a {\em receptor} at a reception date. In fact, we have
\begin{equation}
  \temp{d}_{u,t_e}(v) = t_r - \phi_{v,t_r}(u)
\end{equation}
where $t_e$ is an emission date, and $t_r$ is the corresponding earliest reception date.

This definition can be seen as a generalization of the one in~\cite{KosKW08} in which only punctual contacts were considered. As pointed out in~\cite{CFMS10}, the temporal view a node has of another is deeply impacted by the co-existence of direct and indirect journeys. Indeed, assuming arbitrary long contacts between nodes makes it possible for adjacent contacts to overlap in time and thus produce more complex patterns of time lag between nodes. 
Consider the plots in Figure~\ref{fig:overlapping}, showing an example of evolution of a temporal distance (from $a$ to $c$) and the corresponding temporal view (that $c$ has of $a$) in a very simple TVG.
Contrary to the case with punctual contacts --~where evolution occurs only in discrete steps~-- there is here a mixture of discrete and continuous evolution. (The reader is encouraged to spend a few minutes on this example as these concepts are essential in what follows.)

\begin{figure}[h]
  \centering
  \subfigure[A simple TVG]{
    \label{fig:basicgraph}
    \begin{tikzpicture}[scale=1.2]
      \tikzstyle{every node}=[draw,circle, minimum size=11pt, inner sep=0pt]
      \path (0,-.5) coordinate;
      \path (0,0) node[very thick] (a){a};
      \path (a)+(1,0) node (b){b};
      \path (a)+(2.8,0) node (c){c};
      \tikzstyle{every node}=[below,font=\scriptsize,inner sep=1pt]
      \draw (a)--node[above, yshift=1pt]{$[0,4)$}(b);
      \draw (b)--node[above, yshift=1pt]{$[1,3)$$\cup$$[5,6)$}(c);
    \end{tikzpicture}
  }
  \subfigure[Temporal distance from $a$ to $c$]{
    \label{fig:viewplot-d}
    \begin{tikzpicture}[scale=.6]
      \draw[->] (0,0)--(0,4);
      \draw[->] (0,0)--(5,0);
      \draw (0,1.2)--(.8,0.4)--(2.6,0.4)--(2.6,2.4)--(3.8,1.4) -- (3.8,3) edge [dashed](3.8,4);
      \tikzstyle{every node}=[font=\scriptsize, inner sep=3pt]
      \path (0,0) node[below left, inner sep=2pt] {$0$};
      \path (0,0.4) node[left] {$2\zeta$};
      \path (0,1.2) node[left] {$1$$+$$\zeta$};
      \path (0,2.4) node[left] {$2$$+$$2\zeta$};
      \path (0,3) node[left] {$3$};
      \tikzstyle{every node}=[font=\scriptsize, inner sep=4pt]
      \path (.8,0) node[below] {$1$$-$$\zeta$};
      \path (2.6,0) node[below] {$3$$-$$2\zeta$};
      \path (3.8,0) node[below] {$4$$-$$\zeta$};
      \tikzstyle{every node}=[font=\footnotesize, inner sep=4pt]
      \path (0,4) node[right] {$\temp{d}_{a,t}(c)$};
      \path (5,0) node[right] {$t$};
      \draw (.8,-.1)--(.8,.1);
      \draw (2.6,-.1)--(2.6,.1);
      \draw (3.8,-.1)--(3.8,.1);
      \draw (-.08,.4)--(.08,.4);
      \draw (-.08,1.2)--(.08,1.2);
      \draw (-.08,2.4)--(.08,2.4);
      \draw (-.08,3)--(.08,3);
    \end{tikzpicture}
  }
  \subfigure[Temporal view that $c$ has of $a$]{
    \label{fig:viewplot}
    \begin{tikzpicture}[xscale=.4,yscale=.5]
      \draw[->] (0,0)--(0,5);
      \draw[->] (0,0)--(7,0);
      \draw (1,0)--(1,1)--(3,3)--(5,3)--(5,4)--(6,4)
      edge [dashed](7,4);
      \tikzstyle{every node}=[font=\scriptsize, inner sep=3pt]
      \path (0,0) node[below left, inner sep=2pt] {$0$};
      \path (0,1) node[left] {$1$$-$$\zeta$};
      \path (0,3) node[left] {$3$$-$$2\zeta$};
      \path (0,4) node[left] {$4$$-$$\zeta$};
      \path (1,0) node[below] {$1$+$\zeta$};
      \tikzstyle{every node}=[font=\scriptsize, inner sep=4pt]
      \path (3,0) node[below] {$3$};
      \path (5,0) node[below] {$5$};
      \tikzstyle{every node}=[font=\footnotesize, inner sep=4pt]
      \path (0,5) node[right] {$\phi_{c,t}(a)$};
      \path (7,0) node[right] {$t$};
      \draw (1,-.1)--(1,.1);
      \draw (3,-.1)--(3,.1);
      \draw (5,-.1)--(5,.1);
      \draw (-.08,1)--(.08,1);
      \draw (-.08,3)--(.08,3);
      \draw (-.08,4)--(.08,4);
    \end{tikzpicture}
  }
  \caption{\label{fig:overlapping}Temporal distance and temporal views as a function of time {\it (with $\zeta \ll 1$)}.}
\end{figure}
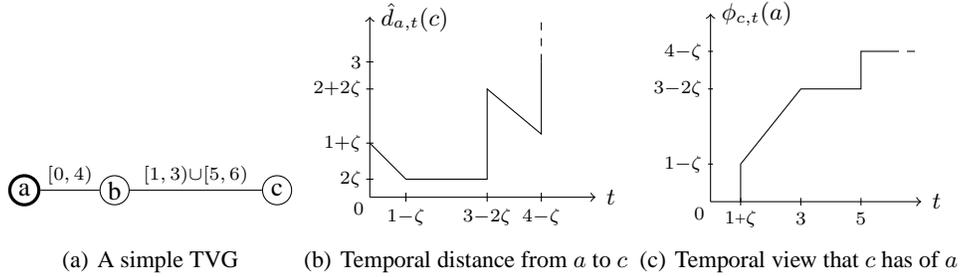

Direct journeys are often faster than indirect ones, but this is not necessarily the case (imagine a very long direct journey, versus a short indirect one whose edges traversals follow closely). As a result, the temporal view at anytime could be caused by either types of journey. Let us call {\em direct view} and {\em indirect view} the views resulting from the best direct  and indirect journeys at any given time (keeping in mind that the temporal view is always the max between both).

Direct and indirect views are of a different nature. Indeed, contrary to indirect journeys, which take place in a discrete way, direct journeys in general belong to a {\em continuum} of several such journeys, which induce {\em continuous} increases of the view. Looking again at the example on Figure~\ref{fig:viewplot} the (direct) view $c$ has of $a$ during $[1+\zeta,3)$ only depends on the {\em topological} length of the corresponding journey, here $2$ (node $c$ can receive any message emitted by $a$ between times $1-\zeta$ and $3-2\zeta$ after a lag of exactly $2\zeta$ time).
Since the duration of a direct journey depends only on its topological length, it is sufficient for a node $u$ to know exactly what is the length of the {\em shortest} direct journey currently arriving to it from another node $v$, called the {\em level} of $u$ with respect to $v$, to deduce the corresponding direct view in real time. This particular way of defining a level can be formalized as follows:
\begin{center}
  \small
  \begin{tabular}{c@{~}l}
    $level_{v,t}(u) =Min\{|\J|_h:\J\in \J^*_{(u,v)}$ $\wedge~ isDirect(\J) \wedge arrival(\J)=t\}$,
  \end{tabular}
\end{center}

\noindent where $isDirect(\J)$ is true iff $\J$ is a direct journey. By convention, $level_{v,t}(u)$ is considered to be $-\infty$ if no direct journey from $u$ is arriving to $v$ at time $t$. Let us insist on the fact that our notion of level is relative to the {\em reception} side, and is not concerned for example with the fact that some edges of the journey may have disappeared by reception time. What matters is to know whether messages could still be {\em currently} arriving from a given remote node through a given number of hops.

\section{Tracking temporal views with \tclocks}
\label{sec:tclocks}

This section is concerned with tracking in real-time the evolution of direct and indirect views. A powerful abstraction in this regard was provided in~\cite{CFMS11}. Precisely, an algorithm called \tclocks was provided that allows to continuously keep track of direct and indirect views in a delay-tolerant network, and allows to plug another algorithm on top of it to solve more concrete problems using this information. The example given in~\cite{CFMS11} was the construction of {\em foremost} broadcast trees in periodically-varying graphs. In this paper, we will use \tclocks to solve the construction of {\em fastest} broadcast trees in periodically-varying graphs, which uses a completely different approach compared to the foremost case.

Concretely, \tclocks maintains two kinds of variables: the $level$ of the local node with respect to any other node (as previously defined, which accounts for the corresponding direct view), and the largest $date$ at which a message carried to the local node through an indirect journey could have been emitted at that other node (accounting for indirect views). Note that even though the nodes clocks might not be synchronized, the fact that they all advance at the same rate and the latency is known to the nodes allows \tclocks to deliver information in the local time referential.

Technically, the abstraction materializes as an intermediate layer between the network and some higher algorithm (see Figure~\ref{fig:relation}), which is informed by means of generating the two following events: {\small \tt levelChanged(src, level, proxy)}, reflecting the evolution of a direct view, and {\small \tt dateImproved(src,date,proxy)}, reflecting that of an indirect view. (Both events are further described below.)

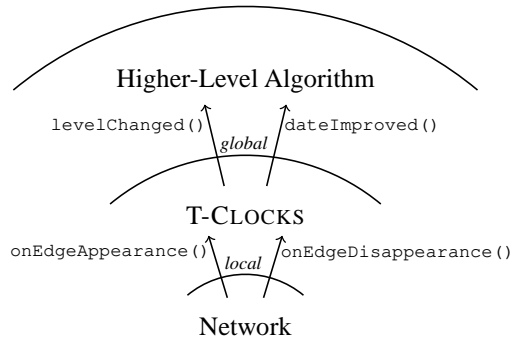
\begin{figure}[h]
  \centering
\begin{tikzpicture}[yscale=0.99]
  \tikzstyle{every node}=[font=\normalsize]
  \path (0,.5) node (dtn) {Network};
  \path (0,2) node (vc) {\tclocks};
  \path (0,3.8) node[sloped,bend right] (client) {Higher-Level Algorithm};
  \tikzstyle{every node}=[font=\scriptsize, inner sep=1pt]
  \tikzstyle{every path}=[semithick, shorten <=4pt, shorten >=1pt]
  \draw ([xshift=-.25cm]vc.north) edge[->] node[pos=.75,left] {{\tt levelChanged()}} node[pos=.55, right, inner sep=2pt]{\it global}([xshift=-.55cm]client.south);
  \draw ([xshift=.25cm]vc.north) edge[->] node[pos=.75,right] {{\tt dateImproved()}} ([xshift=.55cm]client.south);
  \draw ([xshift=-.2cm]dtn.north) edge[->] node[pos=.75,left] {{\tt onEdgeAppearance()}} node[pos=.6, right]{\it \ local} ([xshift=-.5cm]vc.south);
  \draw ([xshift=.2cm]dtn.north) edge[->] node[pos=.75,right] {{\tt onEdgeDisappearance()}} ([xshift=.5cm]vc.south);
  \tikzstyle{every path}=[semithick]
  \draw (50:4.8) arc (50:130:4.8);
  \draw (50:2.8) arc (50:130:2.8);
  \draw (50:1.2) arc (50:130:1.2);
\end{tikzpicture}
\caption{\label{fig:relation} \tclocks as an abstraction to track temporal views.}
\end{figure}

Given a direct or indirect view, we call {\em proxy} the local neighbor responsible for making the view evolve, i.e., the last node before the local node in the corresponding journey (this information is crucial to learn routing paths in the network).
Notifications do not only occur when a level or a date has been updated, but also occur when a proxy has changed (which may happen without any change of view, \textit{e.g.} when two local neighbors are providing the same direct view relative to a given remote node, and one of them disappears). \tclocks are used in the spirit of the {\em observer-observable} design pattern for Object-Oriented development~\cite{Wolfgang94} between two objects, namely the \tclocks algorithm (observable) and the higher algorithm (observer). The higher algorithm subscribes to the events of the \tclocks algorithm by calling a $register()$ function. Then it is notified of both types of events by means of calls to two functions $levelChanged()$ and $dateImproved()$ to be implemented on the higher algorithm and that encapsulate the desired response to the events. The notifications are raised as follows:

\begin{itemize}
\item $levelChanged($Node\ $src$, Integer $level$, Node $proxy)$: called whenever the proxy or direct view with respect to another node (also called {\em source}) has changed. This information is thus received as parameters of the call. 
\item $dateImproved($Node\ $src$, Integer $date$, Node $proxy)$, called whenever the indirect view relative to another node ($src$) has increased if and only if this date is larger than the direct view (that is, larger than $now()-level\times \zeta$ for the corresponding source).
\end{itemize}

\tclocks are typically assumed to run independently from the higher algorithm, and we will assume that it is already running when our higher algorithm registers to it. Note that using \tclocks does {\em hide} a substantial amount of complexity to the higher algorithm, and indeed the additive cost of our fastest broadcast tree construction algorithm is marginal. The complexity of \tclocks is unknown as of today, but the motivation is great to characterize and improve it, as any such improvement would have direct repercussions on the performance of all algorithms that use it as building block.

\section{Learning fastest broadcast trees in periodic TVGs}
\label{sec:fastest}

We now describe how to build fastest broadcast trees in periodically-varying graphs. This problem can be tackled in two steps. First, having the emitter learn {\em when} the broadcast has the potential to be the {\em fastest}, then build a {\em foremost} broadcast tree relative to any of the corresponding emission dates (by definition, nothing better can be done than a foremost broadcast tree once the best dates are known). Building a {\em foremost} broadcast tree relative to a given emission date does not require specific or difficult processes. This can be done by means of a flooding where all the nodes record which of their neighbor gave them the message first, followed by local acknowledgments of these relations. Thus, the rest of the section focuses on the mechanisms by which a node can learn {\em when} the broadcast has the potential to be the fastest ({\it i.e.,} its moment of {\em minimum temporal eccentricity}) using \tclocks.

The temporal eccentricity (or simply eccentricity below) of a node $u$ at date $t$ is formally defined as 
\begin{equation}
  \label{eq:eccentricity}
  ecc_{u}(t) = max\{\temp{d}_{u,t}(v):v \in V\},
\end{equation}
that is, the maximum among all temporal distances (or simply distances below) from $u$ to all other nodes at time $t$, see Section~\ref{sec:distances} for definitions. Informally speaking, it is the duration of the ``slowest'' foremost journey departing at $t$ from node $u$.

\subsection{High-level informal strategy}

The algorithm consists in inferring (and recording) temporal {\em distances} at every node relative to a given emitter, based on the evolution of temporal {\em views} monitored through \tclocks. Precisely, for a given emitter $u$, every node $v \ne u$ infers $\temp{d}_{u}(v)$ from $\phi_{v}(u)$ over one period and records it in a table called {\em distance table}. Since we deal with continuous-time and possibly overlapping contacts, this information is recorded as a set of {\em intervals} that correspond to the different phases of evolution of the distance (discrete or continuous). The distance tables of all nodes are then opportunistically  aggregated (arbitrarily) along a tree rooted in $u$. The aggregation of a children table consists of a segment-wise {\em maximum} against the local distance table (segments can be artificially split to match). Once the emitter has aggregated the table of its last child, the final result corresponds to its eccentricity over time (Equation~\ref{eq:eccentricity}). It finally selects any of the minimum values as preferred initiation date for the intended broadcast, then terminates. 

\subsection{Learning the temporal eccentricities over a period}

Let us first observe that learning the minimum temporal eccentricities in distributed networks is an interesting problem in its own right, and even though we use it here as a primitive for broadcasting, it could be used as well for other tasks, such as electing a leader based on its ability to reach all others quickly.

We now describe how tables of distances are learnt on the receptor side using \tclocks. Based on the relationship between temporal view and temporal distance, we establish three key properties through the following Lemmas. In what follows we use interchangeably the terms {\em emission date} and {\em initiation date}, both referring to the time when a potential message is sent at the emitter.

\begin{lemma}
  \label{lem:slope}
  Every discrete increase of $\phi_v(u)$ by value $k$ corresponds to a continuous decrease of $\temp{d}_u(v)$ of duration $k$.
\end{lemma}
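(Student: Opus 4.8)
I would prove the lemma by exploiting the view/distance duality of Equation~(1), recast through the \emph{foremost-arrival} function
\[
  f(t_e) \;:=\; t_e + \temp{d}_{u,t_e}(v) \;=\; \min\{arrival(\J) : \J\in\J^*_{(u,v)} \wedge departure(\J)\ge t_e\}.
\]
The only structural facts I need are: $f$ is non-decreasing in the emission date $t_e$ (strengthening the departure constraint cannot make the best reachable arrival earlier), and, directly from the definition of the temporal view, $\phi_{v,t}(u)$ is the largest departure date of a journey arriving at $v$ no later than $t$, the maximum being attained. The lemma then reduces to showing that a height-$k$ jump of $\phi_v(u)$ (as a function of the reception date) forces $f$ to be \emph{constant} on an interval of emission dates of length $k$, so that $\temp{d}_{u,t_e}(v)=f(t_e)-t_e$ decreases there with slope $-1$ over that length.

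First I would make precise what a discrete increase means: $\phi_v(u)$ jumps by $k$ at a reception date $t_r$, i.e. $\lim_{t\uparrow t_r}\phi_{v,t}(u)=a$ and $\phi_{v,t_r}(u)=a+k$, so by monotonicity $\phi_{v,t}(u)\le a$ for all $t<t_r$. Two consequences follow. (i) For every $t_e\in(a,a+k]$ and every $\J\in\J^*_{(u,v)}$ with $departure(\J)\ge t_e$ we must have $arrival(\J)\ge t_r$: otherwise $\phi_{v,\,arrival(\J)}(u)\ge departure(\J)>a$ with $arrival(\J)<t_r$, contradicting the left limit; hence $f(t_e)\ge t_r$ on $(a,a+k]$. (ii) From $\phi_{v,t_r}(u)=a+k$ there is a journey departing at $a+k$ arriving by $t_r$, so $f(a+k)\le t_r$; combined with (i) this gives $f(a+k)=t_r$.

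The core step is then just monotonicity: $f(t_e)\le f(a+k)=t_r$ for all $t_e\le a+k$, which together with (i) yields $f(t_e)=t_r$ for every $t_e\in(a,a+k]$. Translating back, $\temp{d}_{u,t_e}(v)=t_r-t_e$ on $(a,a+k]$ --- a linear decrease of slope $-1$ over an interval of emission dates of length $k$, the distance dropping from $t_r-a$ to $t_r-a-k$, i.e. by $k$. Intuitively, all these emission dates are served by the same ``batch'' of messages that was buffered at some intermediate node and got forwarded together when the responsible edge appeared at $t_r-\zeta$, which is exactly why they share the single arrival date $t_r$; this matches the behaviour in Figure~\ref{fig:overlapping}.

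The points requiring care, and where I expect the real work to lie, are: (a) the end-point book-keeping --- the relevant interval of emission dates is the half-open $(a,a+k]$, and at $t_e=a$ itself $\temp{d}_{u,a}(v)$ may be \emph{strictly} below $t_r-a$ because a faster, already-available journey can still depart at $a$; this is consistent with the right-open-interval conventions of the model and must be stated that way; and (b) isolating the discrete part of $\phi_v(u)$ --- since direct journeys make $\phi_v(u)$ grow \emph{continuously} as well, one has to argue that a genuine jump of height $k$ can only be produced by the completion of an indirect journey, which is precisely the situation analysed above. Beyond these, the argument is entirely driven by the monotonicity of $f$ and the duality of Equation~(1).
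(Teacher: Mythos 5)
Your proof is correct and follows essentially the same route as the paper's: the key step in both is that a jump of the view from $a$ to $a+k$ at reception time $t_r$ forces every journey departing after $a$ to arrive no earlier than $t_r$, while the witnessing journey departing at $a+k$ arrives by $t_r$, pinning the foremost arrival to $t_r$ on an emission interval of length $k$ and hence yielding a decrease of $\temp{d}_u(v)$ of slope $-1$ over duration $k$; your foremost-arrival function $f$ and its monotonicity are just a convenient repackaging of this argument. If anything, your endpoint bookkeeping (working on $(a,a+k]$ and noting that $\temp{d}_{u,a}(v)$ may be strictly smaller) is slightly more careful than the paper's, which phrases the same interval as $[t_1,t_2)$.
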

\begin{proof}
  A discrete increase by $k$ of the view at time $t$ (that is, $\phi_{v,t}(u) = \phi_{v,t-\epsilon}(u) + k$), implies the existence of a journey $\J$ from $u$ whose departure is $\phi_{v,t}(u)$ and arrival is $t$. Switching to the emitter viewpoint, the considered increase implies that {\em no} journey starting during $[t_1=\phi_{v,t-\epsilon}(u), t_2=\phi_{v,t}(u))$ could have arrived before $t$ (otherwise such a journey would imply an intermediate increase of the view by less than $k$). Therefore, the temporal distance at any point in $[t_1, t_2)$ is fully determined by the arrival of $\J$, and thus for all $t'$ in $[t_1,t_2)$, we have $\temp{d}_{u,t'}(v) = \temp{d}_{u,t_2}(v) + (t_2-t')$, which corresponds to a continuous decrease of the distance during $t_2-t_1=k$ time units.\qed
\end{proof}

\begin{lemma}
  \label{lem:flat}
  Each continuous increase of $\phi_v(u)$ during $k$ time units corresponds to a stagnation of $\temp{d}_u(v)$ during $k$ time units.
\end{lemma}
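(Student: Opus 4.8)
The plan is to mirror the structure of the proof of Lemma~\ref{lem:slope}, but handle the continuous regime instead of the discrete one. The starting observation is that a continuous increase of $\phi_v(u)$ over an interval $[t_1,t_2)$ of length $k$ (i.e.\ $\phi_{v,t_2-\epsilon}(u) = \phi_{v,t_1}(u) + (t-t_1)$ for every $t \in [t_1,t_2)$, up to the $\epsilon$ limiting convention) is, by the discussion preceding the lemma, exactly the signature of a \emph{direct} view: during such a phase the best journey arriving at $v$ is a direct one whose topological length (the $level$) is constant, so each extra time unit the contact stays open pushes the admissible departure date forward by exactly one time unit. Hence there is a fixed $\ell = level_{v,\cdot}(u)$ and a continuum of direct journeys $\{\J_s : s \in [\phi_{v,t_1}(u),\, \phi_{v,t_1}(u)+k)\}$, where $\J_s$ departs at $s$, has $\ell$ hops, and arrives at $s + \ell\zeta$.

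Next I would translate this to the emitter's viewpoint using Equation~(1), $\temp{d}_{u,t_e}(v) = t_r - \phi_{v,t_r}(u)$. Fix any emission date $t_e$ in the corresponding departure window $[\phi_{v,t_1}(u),\, \phi_{v,t_1}(u)+k)$. The journey $\J_{t_e}$ realizes arrival $t_e + \ell\zeta$, so $\temp{d}_{u,t_e}(v) \le \ell\zeta$. For the matching lower bound I must argue no faster journey exists: any journey departing at $t_e$ and arriving strictly before $t_e + \ell\zeta$ would either be direct with fewer than $\ell$ hops (contradicting that $\ell$ is the minimal direct length available throughout this phase — a shorter direct journey would have shown up as a larger, discretely jumped view), or indirect, in which case it would contribute a strictly larger view at its arrival time, again contradicting that $\phi_v(u)$ is governed by the direct continuum throughout $[t_1,t_2)$. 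Therefore $\temp{d}_{u,t_e}(v) = \ell\zeta$ is constant as $t_e$ ranges over an interval of length $k$, which is precisely a stagnation of $\temp{d}_u(v)$ during $k$ time units.

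The main obstacle I anticipate is making the lower-bound / ``no faster journey'' step fully rigorous: I am implicitly invoking the fact (from the paragraphs on direct and indirect views, and the behavior exhibited in Figure~\ref{fig:overlapping}) that a continuous-increase phase of the view is \emph{caused} by the direct view and that during it no competing journey does better. A careful write-up should either cite that characterization explicitly or re-derive it by a short case analysis on the last hop of a hypothetical faster journey, together with the right-open-interval / tie-breaking conventions from Section~\ref{sec:model}. A secondary subtlety is the boundary behavior at $t_1$ and $t_2$ (the $\epsilon$ limiting arguments and whether the endpoints belong to the stagnation interval), but that is the same bookkeeping already used in Lemma~\ref{lem:slope} and can be handled identically.
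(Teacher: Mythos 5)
Your proposal is correct and takes essentially the same route as the paper: the paper's proof simply notes that a continuous increase of the view is caused by a continuum of direct journeys of the same level departing over the matching window $[\phi_{v,t}(u),\phi_{v,t}(u)+k)$, and concludes $\temp{d}_{u,t_e}(v)=level\times\zeta$ there. Your explicit upper-bound/lower-bound ("no faster journey") step only spells out what the paper leaves implicit in the definition of the view, so no change of approach is involved.
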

\begin{proof}
  Continuous increases of the view are due to continuums of direct journeys of same level. Such increase during some interval $[t, t+k)$ thus implies a continuum of direct journeys departing over $[\phi_{v,t}(u), \phi_{v,t}(u)+k)$ (since $\zeta$ is constant). We thus have $\forall t_e \in [\phi_{v,t}(u),\phi_{v,t}(u)+k), \temp{d}_{u,t_e}(v)=level \times \zeta$ (where $level$ is the level of the considered journeys), which corresponds to a constant during $k$ time units.\qed
\end{proof}

\begin{lemma}
  \label{lem:complete}
  All initiation dates are covered either by a discrete or a continuous increase of the view.
\end{lemma}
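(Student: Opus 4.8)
The plan is to treat $\phi_v(u)$ purely as a function of the reception time and exploit two facts about it: it is non-decreasing, and over a connected period it is unbounded. Non-decreasingness is immediate from the definition, $\phi_{v,t}(u)$ being a maximum over the set of journeys from $u$ arriving by time $t$, a set that only grows as $t$ increases. Unboundedness follows because $v$ is reachable from $u$ starting at \emph{any} date -- the PVG being connected over every period, a standing assumption without which the broadcast itself is infeasible -- so for every bound $B$ there is a journey from $u$ departing after $B$, hence a reception time at which $\phi_v(u) > B$.

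Next I would invoke the classification of the view's evolution that already underlies Lemmas~\ref{lem:slope} and~\ref{lem:flat}: indirect views change only through isolated jumps, whereas a direct view increases continuously with slope exactly $1$ as long as its level is constant -- since $\zeta$ is common to all edges, a direct journey of fixed topological length $\ell$ has fixed duration $\ell\zeta$, so shifting the departure shifts the arrival by the same amount -- and jumps whenever its level changes. As $\phi_v(u)$ is the (non-decreasing) maximum of the accumulated indirect view and the current direct views, every one of its increases is therefore either a discrete jump or a continuous slope-$1$ increase, and between two consecutive increases it stays constant.

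The crux is then to show these increases tile the emission-date axis with no gap. Because the view is non-decreasing and is constant between consecutive increases, each increase starts from exactly the value currently held: the left endpoint of the emission-date interval covered by an increase coincides with the value reached by the previous one (a constant stretch at value $c$ sends all of its reception times to the single emission date $c$, which is then the left endpoint of the next increase's interval). Hence the union over all increases of the covered emission-date intervals is an interval reaching to $+\infty$, and it also contains every emission date smaller than the view's first value, those being swept by the very first increase (using the convention $level = -\infty$ before any direct journey arrives, or simply that the lifetime starts at $0$). Now fix an initiation date $t$: by unboundedness the view eventually exceeds $t$, so at some reception time it passes from a value $\le t$ to a value $\ge t$. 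If that transition is a discrete increase, $t$ lies in the corresponding jump interval $[a,b)$ with $a \le t < b$; if it is a continuous increase, $t$ is one of the emission dates it sweeps; and if the view lands exactly on $t$ before rising again, $t$ is the left endpoint of the next (discrete or continuous) increase. In every case $t$ is covered. The main obstacle is precisely this last ``no gap'' step -- not the existence of jumps and continuous segments, which is the content of the preceding discussion and of Lemmas~\ref{lem:slope}--\ref{lem:flat}, but the fact that consecutive increases chain up contiguously, which is exactly what forbids an initiation date that is neither attained by nor jumped over by the view.
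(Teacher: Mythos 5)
Your proof is correct, and it reaches the lemma by a recognizably different route than the paper. The paper's own proof is two sentences: it invokes the \emph{past-inclusive} nature of the view---if a message emitted at $t_e$ could have arrived by $t_r$, then any message emitted before $t_e$ could also have arrived by $t_r$ (a later journey also serves earlier emission dates, since waiting at the source is allowed)---so the set of emission dates covered by time $t_r$ is the downward-closed interval bounded by $\phi_{v,t_r}(u)$ and hence gap-free, and it then simply notes that any increase is discrete or continuous. You instead ground the ``no gap'' claim purely in the structure of $\phi_v(u)$ as a function of reception time: it is non-decreasing, its increases are exactly the jumps and slope-$1$ continuous stretches already classified for Lemmas~\ref{lem:slope} and~\ref{lem:flat}, and consecutive increases chain contiguously (a jump sweeps $[\phi_{v,t^-}(u),\phi_{v,t}(u))$ and a continuous stretch starts from the value currently held), with the boundary case of the view landing exactly on a date handled separately. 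The one ingredient you make explicit that the paper leaves implicit is unboundedness of the view, derived from periodic reachability of $v$ from $u$; this assumption is genuinely needed (without it some emission dates are never covered, but then their temporal distance---and the broadcast itself---is undefined), so flagging it is a strength rather than a gap. In short, your argument is more mechanical and self-contained, while the paper's is shorter and attributes the absence of gaps to the same waiting-at-the-source semantics that also underlies the proof of Lemma~\ref{lem:slope}.
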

\begin{proof}
  By nature of the view which is {\em past-inclusive}, {\it i.e.,} the fact that a message emitted at time $t_e$ could have arrived by some time $t_r$ implies that {\em any} message emitted before $t_e$ could have also arrived by $t_r$. (In essence, there is no ``gap''.) Besides, an increase is necessarily discrete or continuous.\qed
\end{proof}

Combination of Lemmas~\ref{lem:slope},~\ref{lem:flat} and~\ref{lem:complete} allows us to state the following general theorem on temporal distances (with constant edge latency $\zeta$).
\begin{theorem}
  \label{th:distance}
  The evolution of $\hat{d}_{u}(v)$ can be fully captured by a sequence of segments of two possible types: {\em flat} segments (stagnation of the value) and {\em slope} segments (continuous decrease of the value). This sequence can be inferred at $v$ by associating every continuous ({\it resp.} discrete) increase of $\phi_{v}(u)$ to a flat ({\it resp.} slope) segment of $\hat{d}_{u}(v)$.
\end{theorem}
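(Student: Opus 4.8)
The plan is to stitch the three lemmas together along the timeline of initiation dates. First I would fix the emitter $u$ and a destination $v$ and look at the function $t_r \mapsto \phi_{v,t_r}(u)$, which is non-decreasing in the reception time $t_r$ (a message received by some date is still received by any later date). Its evolution therefore decomposes into a succession of \emph{discrete jumps} (instantaneous increases by some amount $k>0$) and \emph{continuous increases} (intervals on which it rises at unit rate, coming from a continuum of direct journeys of a fixed level, as discussed in Section~\ref{sec:distances}). This decomposition is exhaustive because, by the past-inclusiveness invoked in Lemma~\ref{lem:complete}, no emission date is ``skipped''.

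Next I would translate each piece to the emitter's side. A discrete jump at reception time $t_r$ from $\phi_{v,t_r-\epsilon}(u)$ to $\phi_{v,t_r}(u)=\phi_{v,t_r-\epsilon}(u)+k$ witnesses, by Lemma~\ref{lem:slope}, a \emph{slope} segment of $\temp{d}_u(v)$ over the emission interval $[\phi_{v,t_r-\epsilon}(u),\phi_{v,t_r}(u))$, on which the distance decreases continuously over exactly $k$ time units. A continuous increase of the view during $[t_r, t_r+k)$ witnesses, by Lemma~\ref{lem:flat}, a \emph{flat} segment of $\temp{d}_u(v)$ over the emission interval $[\phi_{v,t_r}(u), \phi_{v,t_r}(u)+k)$, on which the distance is constant for $k$ time units. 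Because $\phi_{v,\cdot}(u)$ is monotone, the emission intervals produced by consecutive reception-side events are themselves consecutive and pairwise non-overlapping; together with Lemma~\ref{lem:complete} (no gaps), they \emph{tile} the set of all initiation dates. Concatenating the corresponding flat and slope segments in the natural order of initiation dates thus yields the full evolution of $\temp{d}_u(v)$ as claimed.

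For the ``inference at $v$'' part: \tclocks continuously reports the evolution of the direct and indirect views at $v$ in $v$'s own time reference; in particular $v$ observes each increase of $\phi_v(u)$ and whether it is discrete (a \texttt{dateImproved} step) or continuous (a run of \texttt{levelChanged} at fixed level). By the correspondence just established, $v$ simply records a slope segment for every discrete increase and a flat segment for every continuous increase, obtaining the sequence with no further knowledge of the schedule.

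The main obstacle I anticipate is the tiling argument, i.e., making precise that the emission intervals arising from discrete and from continuous increases neither overlap nor leave gaps; this rests on monotonicity of the view on the receptor side plus the past-inclusiveness of Lemma~\ref{lem:complete}, after which everything else is bookkeeping. A secondary point worth one sentence is that a flat ``segment'' may be degenerate (a continuous increase of zero duration) and that two slope segments may be adjacent, so the stated ``sequence of segments of two possible types'' should be read as permitting such adjacencies.
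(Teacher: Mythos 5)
Your proof is correct and takes essentially the same route as the paper, which gives no separate argument beyond stating that Theorem~\ref{th:distance} follows by combining Lemmas~\ref{lem:slope}, \ref{lem:flat} and~\ref{lem:complete}; your explicit tiling of the initiation dates (monotonicity of $\phi_{v,\cdot}(u)$ for non-overlap, past-inclusiveness for no gaps) is precisely the unstated glue the paper relies on. One minor caveat on your parenthetical about the mechanism: a discrete jump of the view is not signalled only by \texttt{dateImproved} --- a \texttt{levelChanged} to a strictly smaller level also produces a discrete jump of the direct view (which is why Algorithm~\ref{algo:distance} checks for both a flat and a slope segment on every event) --- but this concerns the bookkeeping at $v$, not the correctness of your argument.
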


This strategy is the one considered by Algorithm~\ref{algo:distance}, whose underlying principle consists in detecting (and recording) every transition between two segments of distance by means of transitions in the evolution of the temporal view (using \tclocks). Each transition is recorded as a triplet containing i) an emission date, ii) the corresponding value (distance), and iii) the type of segment {\em starting} at this emission date ({\em flat} or {\em slope}).
An example of such sequence is given on Figure~\ref{fig:distance-a-c}, representing the distance from node $a$ to node $c$ in the example TVG of Figure~\ref{fig:triangle}. 

\begin{figure}[h]
  \centering
  \begin{tikzpicture}[scale=1.8]
    \clip (-1.2,-1.4) rectangle (1.2,.12);
    \tikzstyle{every node}=[draw, circle, inner sep=1.5pt]
    \path (0,0) node[very thick] (a){$a$};
    \path (a)+(-58:1.3) node (c){$c$};
    \path (a)+(-122:1.3) node (b){$b$};
    \tikzstyle{every node}=[font=\scriptsize]
    \draw (a)--node[midway,left]{$[0,30)$}(b);
    \draw (a)--node[midway,right]{$[20,60)$}(c);
    \draw (b)--node[midway,below]{$[10,40)\cup [70,80)$}(c);
  \end{tikzpicture}
  \caption{\label{fig:triangle}Example of periodic TVG ({\it with period $p=100$ and edge latency~$\zeta=1$}).}
\end{figure}
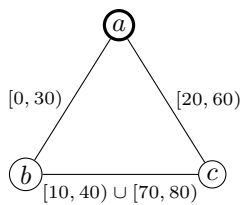

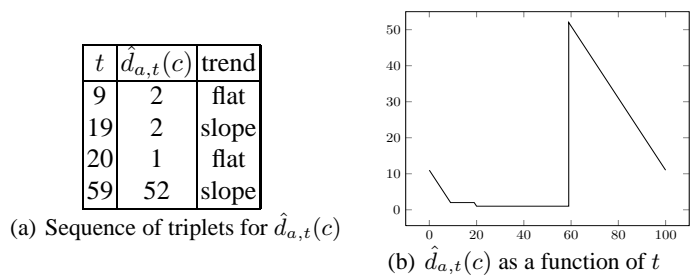
\begin{figure}[h]
  \centering
\subfigure[Sequence of triplets for $\temp{d}_{a,t}(c)$]{
  \label{fig:table-a-c}
  ~~~~~~~~~
  \begin{minipage}[c]{3cm}
  \begin{tabular}{|c|c|c|}
    \hline
    $t$&$\temp{d}_{a,t}(c)$&trend\\\hline
    9&2&flat\\
    19&2&slope\\
    20&1&flat\\
    59&52&slope\\\hline
  \end{tabular}

  \end{minipage}
  ~~~
}\hspace{10pt}
\subfigure[$\temp{d}_{a,t}(c)$ as a function of $t$]{
  \label{fig:curve-a-c}
  \begin{minipage}[c]{3.8cm}
    \begin{tikzpicture}[xscale=.55,yscale=.5]
      \begin{axis}[ymax=55]
        \addplot[mark=none] plot coordinates {
          (0,11)
          (9,2)
          (19,2)
          (20,1)
          (59,1)
          (59,52)
          (100,11)
        };
      \end{axis}
    \end{tikzpicture}
  \end{minipage}
}
\caption{\label{fig:distance-a-c}Temporal distance from $a$ to $c$, as a function of the emission date.}
\end{figure}

\begin{algorithm}[tbh]
  \small
  \begin{algorithmic}[1]
    \medskip
    \STATE $Map <$$Date, Distance$$>\ table \gets \emptyset$
    \STATE $Integer\ currentLevel \gets +\infty$
    \STATE $Date\ startD \gets nil$
    \STATE $Date\ pendingED \gets nil$
    \codetitle{\textbf{\textit{init()}}}
    \STATE $TClocks.register()$
 
    \codetitle{\textbf{\textit{dateImproved\,(Node\ $src$, Date $date$, Node $proxy$)}}}
    \IF {$src=emitter$}
    \STATE $update(date)$
    \ENDIF

    \codetitle{\textbf{\textit{levelChanged\,(Node $src$, Level $level$, Node $proxy$)}}}
    \IF {$src=emitter$}
    \STATE $update(now()-level \times \zeta)$
    \STATE $currentLevel \gets level$
    \ENDIF

    \codetitle{\textbf{\textit{update\,(Date $newED$)}}}
    \IF {$startD = nil$}
    \STATE $startD \gets now()$
    \STATE $pendingED \gets newED$
    \ELSE
    \STATE $updateFlat()$
    \STATE $updateSlope(newED)$
    \IF {$now() = startD + p$}\label{code:term1}
    \STATE $terminate$\label{code:term2}
    \ENDIF
    \ENDIF
    
    \codetitle{\textbf{\textit{updateFlat\,()}}}
    \IF {$currentLevel < +\infty$}\label{code:bestED1}
    \STATE $Date\ bestED \gets now() - currentLevel \times \zeta$
    \IF {$bestED > pendingED$}\label{code:bestED2}
    \STATE \commenttt{A flat segment is detected}
    \STATE $table.add(pendingED,currentLevel \times \zeta,"flat")$
    \STATE $pendingED \gets bestED$
    \ENDIF
    \ENDIF

    \codetitle{\textbf{\textit{updateSlope\,(Date $newED$)}}}
    \IF {${newED} > pendingED$}\label{code:newED}
    \STATE \commenttt{A slope segment is detected}
    \STATE $table.add(pendingED,\ now()-pendingED,\ "slope")$\label{code:slope}
    \STATE $pendingED \gets newED$
    \ENDIF
  \end{algorithmic}
  \caption{\label{algo:distance}Computing temporal distances at a receptor, relative to a given emitter.}
\end{algorithm}
The subtlety here is that the value of the distance at the beginning of a slope segment, as well as the duration of both types of segments, becomes known only {\em at the end} of the corresponding temporal view segment; therefore, the algorithm always records values relative to a previously pending emission date ($pendingED$). Precisely, whenever an event related to the temporal view occurs, whether it be caused by the arrival of a better indirect journey ($dateImproved()$) or a change in the level ($levelChanged()$), the same update function is called involving the following sub-routines (see Algorithm~\ref{algo:distance}):
\begin{list}{\labelitemi}{\leftmargin=.5em}
\item $updateFlat()$: If the current level ($currentLevel$) was not infinite, then the corresponding continuum may have delivered new emission dates (checked in lines~\ref{code:bestED1} to~\ref{code:bestED2}). If this is the case, then a flat segment is inserted for the pending date using distance value $currentLevel\times \zeta$ (see proof~\ref{lem:flat}).
\item $updateSlope()$: If the new event implies a discrete improvement of the view (line~\ref{code:newED}), then a slope segment must be created, and only then the distance at the pending emission date becomes known. The segment being a continuous decrease, the value corresponds to the distance at the newly received emission date plus the time elapsed between that date and the pending date (see proof~\ref{lem:slope}), {\it i.e.,} {\small $(now()-newED)+(newED-pendingED) = now()-pendingED$} (see line~\ref{code:slope}).
\end{list}

Observe that nothing prevents a same event from inducing both a flat {\em and} a slope segments, as is the case in the triangle {\small TVG} of Figure~\ref{fig:triangle} when $c$'s level relative to $a$ changes at time $21$. Table~\ref{tab:other-start} shows an example of execution trace corresponding to the most relevant steps at node $c$ (relative to emitter $a$) in the example of Figure~\ref{fig:distance-a-c}. The beginning of execution at node $c$ was arbitrarily set anytime between date $21$ and date $60$ (modulo~$p=100$).\bigskip

\begin{table}[h]
  \centering
  {\footnotesize
    \noindent
    \begin{tabular}{c|c|l}
      Date&Event&Action\\\hline
      60&$levelChanged(+\infty)$&$startD \gets 60$\\
      & &$pendingED \gets 59$\\
      & &$currentLevel \gets +\infty$\\
      111&$levelChanged(2)$&$updateFlat()$\\
      & &$updateSlope(109)$\\
      & &\boldmath$table.add(59,52,"slope")$\\
      & &$pendingED \gets 109$\\
      & &$currentLevel \gets 2$\\
      121&$levelChanged(1)$&$updateFlat()$\\
      & &\boldmath$table.add(109,2,"flat")$\\
      & &$pendingED \gets 119$\\
      & &$currentLevel \gets 1$\\
      & &$updateSlope(120)$\\
      & &\boldmath$table.add(119,2,"slope")$\\
      & &$pendingED \gets 120$\\
      & &$currentLevel \gets 1$\\
      160&$levelChanged(+\infty)$&$updateFlat()$\\
      & &\boldmath$table.add(120,1,"flat")$\\
      & &$terminate$\\
    \end{tabular}
  }\medskip
\caption{\label{tab:other-start}Relevant traces at node $c$ relative to emitter $a$ (assuming an arbitrary start of execution between dates $21$ and $60$).}
\end{table}

\begin{theorem}
  \label{th:correctdistance}
The execution of Algorithm~\ref{algo:distance} in a periodically-varying graph
$\G$ with known period $p$, relative to emitter $u$, results in every node $v$ capturing correctly the evolution of $\temp{d}_{u}(v)$ over one complete period, in $O(p)$.
\end{theorem}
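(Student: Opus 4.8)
The proof will rest on three ingredients already in hand: the structural characterization of Theorem~\ref{th:distance} (the curve $\temp{d}_u(v)$ is a gap-free concatenation of flat and slope segments, in correspondence with continuous/discrete increases of $\phi_v(u)$), the fact that \tclocks reports \emph{every} such increase through a $levelChanged$ or $dateImproved$ event, and the periodicity of $\rho$. First I would reduce the task to a single period: since $\rho$ is $p$-periodic the temporal view is quasi-periodic, $\phi_{v,t+p}(u)=\phi_{v,t}(u)+p$, hence $\temp{d}_{u,t+p}(v)=\temp{d}_{u,t}(v)$, so the distance function has period $p$ and it suffices to reconstruct it on one window of length $p$, started at an arbitrary date. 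That window is exactly $[startD,startD+p]$, delimited by the first \tclocks event after $register()$ and the test of line~\ref{code:term1}. (I use here the standing assumption that the PVG is connected over time, so that $v$ is reached from $u$ within each period and \tclocks does fire events; otherwise $\temp{d}_u(v)\equiv\infty$ and there is nothing to capture; I also use that $\rho$ is piecewise-constant with finitely many pieces per period, so finitely many events occur per period.)

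\textbf{Event--segment correspondence (the meat).} Next I would show that on this window the algorithm writes into $table$ precisely the triplets describing the segments of $\temp{d}_u(v)$, with correct value and type. Between two consecutive \tclocks events the level is constant and no indirect improvement occurs, so $\phi_v(u)$ evolves only through the current continuum, i.e. over a single flat segment at value $currentLevel\times\zeta$ (Lemma~\ref{lem:flat}); this segment's right endpoint is recovered at the next event by the test $bestED=now()-currentLevel\times\zeta>pendingED$ in $updateFlat()$, and if $currentLevel=+\infty$ no date is covered and nothing is written, which is correct. A discrete jump of the view happens only \emph{at} an event: either a $dateImproved$ carrying a new emission date beyond $pendingED$, or a $levelChanged$ to a strictly smaller level $\ell'<\ell$, making the direct view jump by $(\ell-\ell')\zeta$; in both cases $updateSlope()$ records a slope with value $now()-pendingED$, which (as unfolded in the text) equals (distance at the freshly revealed emission date) $+$ (elapsed time), i.e. the value prescribed by Lemma~\ref{lem:slope}. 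I would then verify the one mixed case flagged after Algorithm~\ref{algo:distance}, where a single $levelChanged$ to a smaller level both closes a continuum \emph{and} opens a jump: $update()$ flushes the flat segment with the \emph{old} $currentLevel$ first and the slope segment second, which is exactly the code since $currentLevel$ is reassigned only after $update()$ returns. Upward jumps of the distance itself (a continuum ending against a longer one, or the indirect view lagging behind a continuum) appear harmlessly as junctions between two consecutive segments of different values, which Theorem~\ref{th:distance} permits.

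\textbf{Invariant, termination, and complexity.} With the correspondence established, I would prove by induction on the events of the window the invariant: right after processing an event, $table$ contains exactly the completed segments of $\temp{d}_u(v)$ whose start date is $<pendingED$, each with correct value and type, and $pendingED$ is the start date of the current, as-yet-undetermined segment. The base case is the first event (it only sets $startD$ and $pendingED$); the inductive step is the previous paragraph. By quasi-periodicity of the view an event occurs at date $startD+p$; when it is processed, $updateFlat()$ flushes the last pending flat segment before line~\ref{code:term1} fires, and the emission dates covered over $[startD,startD+p]$ span a full period, so by periodicity the entire function is determined (a segment straddling $startD$ is recovered modulo $p$ from its other part at the end of the window). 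For the running time: some event occurs within $p$ of $register()$, the algorithm then runs for exactly $p$ more time units by lines~\ref{code:term1}--\ref{code:term2}, and each of the boundedly-many events costs $O(1)$ (up to a dictionary insertion); hence it terminates in $O(p)$ time.

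\textbf{Expected main obstacle.} I expect the second step to be the only delicate one: ensuring that the lazy ``pending'' bookkeeping never mis-attributes a value or a type when an indirect improvement lands in the middle of a direct continuum, and choosing the half-open conventions at segment endpoints so that the measure-zero mismatches at junction points are immaterial. The remaining steps are either direct appeals to Lemmas~\ref{lem:slope}--\ref{lem:complete} and Theorem~\ref{th:distance}, or a routine induction plus the periodicity argument.
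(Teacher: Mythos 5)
Your proposal is correct and follows essentially the same route as the paper's proof: it combines the correctness of \tclocks (all view transitions are reported), the segment correspondence of Theorem~\ref{th:distance} via Lemmas~\ref{lem:slope} and~\ref{lem:flat} as implemented by $updateFlat()$ and $updateSlope()$, and periodicity to argue that terminating one period after the first record (lines~\ref{code:term1}--\ref{code:term2}) yields the complete table in $O(p)$. You simply make explicit (via the event-by-event invariant, the mixed flat-plus-slope case, and the segment straddling $startD$) details that the paper's proof leaves at a high level.
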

\begin{proof}
  The argument is based on the correctness of \tclocks~\cite{CFMS11} combined with the strategy of Theorem~\ref{th:distance}, which Algorithm~\ref{algo:distance} implements. Theorem~\ref{th:distance} states that each segment of evolution of the temporal distance corresponds to a segment of evolution of the temporal view (the precise characterization of this correspondence being given by the proofs of Lemma~\ref{lem:slope} and~\ref{lem:flat}). The correctness follows from detecting all transitions in evolution of the temporal view (events $levelChanged()$ and $dateImproved()$), guaranteed by \tclocks, and transposing the corresponding segments into segments of the evolution of the temporal distance by means of procedures $updateFlat()$ and $updateSlope()$ in Algorithm~\ref{algo:distance} (both of which are called for every such transition, and each of which checks for the need to create the corresponding type of segment according to Lemmas~\ref{lem:slope} and~\ref{lem:flat}). One period exactly after the first record is inserted, a last record is inserted in the distance table, then the algorithm terminates (lines~\ref{code:term1} and~\ref{code:term2}). Because the dates are taken modulo $p$, this last record completes the distance table relative to one complete period.\qed
\end{proof}

\subsection{Aggregating distance tables back to the emitter}
Distance tables relative to a given emitter are aggregated along a tree rooted at that node. Such a tree may be arbitrary and built, for instance, using the same strategy as mentioned above for single foremost broadcast trees ({\it i.e.,} flooding a dedicated message from the emitter and detecting from which neighbor this message is first received at every node, followed by local acknowledgments of the corresponding relations). Once a node has computed its distance table and aggregated the tables of all its children (if any), it sends the resulting table to its parent. The aggregation of a children table (described below) consists of a segment-wise {\em maximum} among both tables for all emission dates (see Figure~\ref{fig:curves} for a visual illustration). 
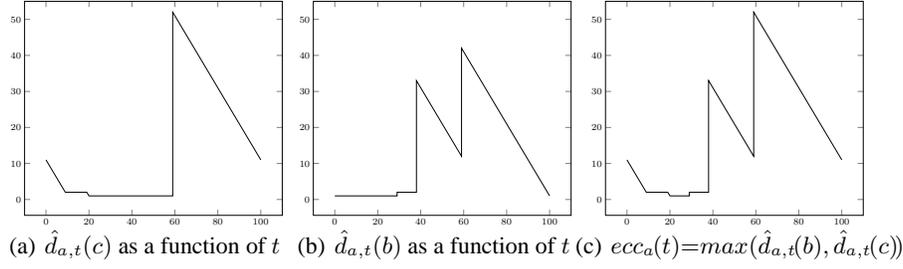
\begin{figure}[h]
  \centering
  \scriptsize
  \subfigure[$\temp{d}_{a,t}(c)$ as a function of $t$]{
    \label{fig:curve-a-c}
    \begin{tikzpicture}[xscale=.5,yscale=.5]
      \begin{axis}[ymax=55]
        \addplot[mark=none] plot coordinates {
          (0,11)
          (9,2)
          (19,2)
          (20,1)
          (59,1)
          (59,52)
          (100,11)
        };
      \end{axis}
    \end{tikzpicture}
  }
  \subfigure[$\temp{d}_{a,t}(b)$ as a function of $t$]{
    \begin{tikzpicture}[xscale=.5,yscale=.5]
      \begin{axis}[ymax=55]
        \addplot[mark=none] plot coordinates {
          (0,1)
          (29,1)
          (29,2)
          (38,2)
          (38,33)
          (59,12)
          (59,42)
          (100,1)
        };
      \end{axis}
    \end{tikzpicture}
  }
  \hspace{-4pt}
  \subfigure[$ecc_a$\hspace{-1pt}$(t)$$=$$max (\temp{d}_{a,t}$\hspace{-1pt}$(b), \temp{d}_{a,t}$\hspace{-1pt}$(c)$\hspace{-1.5pt}$)$]{
    ~
    \begin{tikzpicture}[xscale=.5,yscale=.5]
      \begin{axis}[ymax=55]
        \addplot[mark=none] plot coordinates {
          (0,11)
          (9,2)
          (10,2)
          (19,2)
          (20,1)
          (29,1)
          (29,2)
          (38,2)
          (38,33)
          (59,12)
          (59,52)
          (100,11)
        };
      \end{axis}
    \end{tikzpicture}
    ~~~~~~~
  }
\caption{\label{fig:curves}Aggregation of distances. The left curve (distance from $a$ to $c$) is combined to the middle curve (distance from $a$ to $b$) in order to yield the eccentricity of node $a$ over one period.}
\end{figure}
Once the emitter has aggregated all its children tables within its own table (initially consisting of a single flat segment of value~$0$), the final result corresponds to the maximum distance in every point in time among all nodes, and therefore (Equation~\ref{eq:eccentricity}) represents the evolution its eccentricity over one period. Once this table known, it finally selects any of the dates at which the eccentricity is minimum (relatively to which the foremost broadcast tree is to be built), then terminates. We now describe the aggregation process in more detail.

The purpose of aggregation is to select the maximum value among all distances in every point in time. Aggregating two tables based on their segments would be relatively straightforward if the segments were horizontally aligned ({\it i.e.,} if the period was split into the same sequence of intervals in both tables) and vertically non-crossing ({\it i.e.,} given any pair of aligned segments, one of them remains higher than or equal to the other during the corresponding interval). In such an ideal case, the maximum operation between two aligned segments, say ${\cal S}_i=(t_i,\temp{d}_{t_i},trend_i)$ and ${\cal S}_i'=(t_i,\temp{d}'_{t_i},trend'_i)$ is straightforward; it may consist of selecting ${\cal S}_i$ iff
$$(\temp{d}_{t_i} > \temp{d}_{t_i'}) \vee (\temp{d}_{t_i} = \temp{d}_{t_i'} \wedge trend_i=flat)\mbox{; and ${\cal S}_i'$ otherwise. }$$

\begin{theorem}
  \label{th:correctmaximum}
  Given two segments whose intervals are aligned and values are non-crossing, the above logic selects the correct maximum segment.
\end{theorem}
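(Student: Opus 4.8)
The plan is to reduce the statement to a finite case analysis on the pair of trends, using two structural facts from the preceding lemmas. By Lemma~\ref{lem:flat}, a flat segment is constant on its interval; by (the proof of) Lemma~\ref{lem:slope}, a slope segment decreases at \emph{unit rate}, since $\temp{d}_{u,t'}(v)=\temp{d}_{u,t_2}(v)+(t_2-t')$. Hence, writing $[t_i,t_{i+1})$ for the common interval of ${\cal S}_i$ and ${\cal S}_i'$, each of the two segments is an affine function of time on $[t_i,t_{i+1})$ with slope either $0$ (flat) or $-1$ (slope); in particular, two segments sharing the same trend and the same value at $t_i$ coincide on all of $[t_i,t_{i+1})$.

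First I would handle the cases where the trends agree. If both are flat, both are constant, so the pointwise maximum is realized by whichever has the larger value at $t_i$; when the values are equal the segments are identical and selecting ${\cal S}_i$ (which the logic does, since $trend_i=flat$) is correct. If both are slope, the two segments are parallel lines, so the one that is larger at $t_i$ stays larger on the whole interval, and when they are equal at $t_i$ they are again identical, so selecting ${\cal S}_i'$ (the logic's choice when the values tie and $trend_i\neq flat$) is correct. In all these subcases the non-crossing hypothesis holds automatically.

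The substantive case is a flat segment against a slope segment; I treat ${\cal S}_i$ flat with value $c$ and ${\cal S}_i'$ slope with value $v'$ at $t_i$ (the opposite orientation is symmetric). If $v'>c$, the non-crossing hypothesis forces ${\cal S}_i'\ge{\cal S}_i$ throughout $[t_i,t_{i+1})$ --- the slope cannot have dropped below $c$, since the interval ends before that would happen --- and the logic correctly picks ${\cal S}_i'$ because $c<v'$. If $v'<c$, then ${\cal S}_i'$ starts strictly below ${\cal S}_i$ and only decreases, so ${\cal S}_i$ dominates and the logic correctly picks it because $c>v'$. The delicate subcase is $v'=c$: the segments agree only at the single point $t_i$, but for every $t\in(t_i,t_{i+1})$ the slope segment has strictly decreased while the flat one has not, so the flat segment ${\cal S}_i$ is the maximum over the interval; this is precisely why the logic must break the equal-value tie in favour of the flat trend, which it does via the clause $\temp{d}_{t_i}=\temp{d}'_{t_i}\wedge trend_i=flat$. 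In the mirror subcase $trend_i=slope$, and the logic correctly selects the flat segment ${\cal S}_i'$.

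The only point requiring care --- the ``hard part'', such as it is --- is this tie-breaking at equal left endpoints: equality of $\temp{d}_{t_i}$ and $\temp{d}'_{t_i}$ does not by itself determine the maximum over the \emph{open} interval, and one has to invoke both the unit-slope normalization (so that same-trend, same-start segments are genuinely identical) and the fact that a flat segment strictly dominates a slope segment immediately after $t_i$ in order to see that the stated rule always selects a segment that attains the pointwise maximum on $[t_i,t_{i+1})$.
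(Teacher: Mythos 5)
Your proof is correct and follows essentially the same route as the paper's: a case analysis on the relative values at the common left endpoint, using the non-crossing hypothesis when the values differ and the observation that a flat segment dominates an equal-valued slope segment to justify the tie-breaking clause. You are merely somewhat more explicit than the paper about the unit-rate decrease of slope segments (which makes same-trend, same-value segments identical), but this is the same argument.
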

\begin{proof}
  Let us examine separately the cases that $\temp{d}_{t_i} > \temp{d}_{t_i'}$, $\temp{d}_{t_i} < \temp{d}_{t_i'}$, or $\temp{d}_{t_i} = \temp{d}_{t_i'}$. If $\temp{d}_{t_i} > \temp{d}_{t_i'}$ ${\cal S}_i$ is selected irrespective of the trend (left clause). This condition is sufficient because segments cannot cross. If $\temp{d}_{t_i} < \temp{d}_{t_i'}$ the formula evaluates to false, leading to select ${\cal S}_i'$, which is correct for the same reason. If $\temp{d}_{t_i} = \temp{d}_{t_i'}$ then three cases are again possible: either both segments are flat (and therefore equals, ${\cal S}_i$ is arbitrarily selected), or both segments are a slope (and therefore equals, ${\cal S}_i'$ is arbitrarily selected), or ${\cal S}_i$ is flat and ${\cal S}_i'$ is a slope (a flat segment starting at the same high as a decreasing slope will necessarily remain higher; ${\cal S}_i$ is therefore selected), or ${\cal S}_i$ is a slope and ${\cal S}_i'$ is flat (symmetrical case, ${\cal S}_i'$ will be selected).\qed
\end{proof}

In reality, the segments do cover intervals of various sizes and temporal location, and in some cases the corresponding distance values may ``cross'' even when the intervals covered by two segments are aligned ({\it e.g.} a slope segment starting at a slightly higher value than a flat segment). Rather than complicating the above logic of aggregation, we pre-process the two tables before aggregation as follows:

\begin{list}{\labelitemi}{}
\item[1)] {\it Split the segments so as to align both tables:} Splitting a segment $(t_i,\temp{d}_{t_i},trend)$ at date $t_i'$ comes to insert a subsequent entry $(t_i',\temp{d}_{t_i'},trend)$ such that $\temp{d}_{t_i'}=\temp{d}_{t_i}$ if $trend=flat$ and $\temp{d}_{t_i'}=\temp{d}_{t_i}-(t_i'-t_i)$ if $trend=slope$. Each table undergoes such a split relative to every index date ($t_i's$) that exists only in the other table. An additional split relatively to time $0$ may also be added for convenience (if not already present).
\item[2)] {\it Split further to eliminate crossing values:} Given two aligned segments of size $l$, if one is flat and the other a slope, say $(t,\temp{d}_{f_t},flat)$ and $(t,\temp{d}_{s_t},slope)$, and $0 < \temp{d}_{s_t} - \temp{d}_{f_t} < l$, then both segments are split at $t+(\temp{d}_{s_t} - \temp{d}_{f_t})$ precisely. 
\end{list}

\begin{theorem}
  \label{th:correctpreprocessing}
  These pre-processing steps produce tables whose segments are aligned and non-crossing.
\end{theorem}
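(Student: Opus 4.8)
The plan is to regard each distance table as a piecewise-affine function of the emission date over one period, all of whose pieces have slope $0$ (the \emph{flat} segments) or slope $-1$ (the \emph{slope} segments, the rate $-1$ coming from the identity $\temp{d}_{u,t'}(v)=\temp{d}_{u,t_2}(v)+(t_2-t')$ in the proof of Lemma~\ref{lem:slope}; a segment-wise maximum of such functions again has only these two slopes, so this form holds of every table fed into the pre-processing). I would then establish three things in turn: that the two splitting operations only refine the partition into pieces without changing the represented function; that after step~1 the two partitions coincide; and that after step~2 every pair of aligned pieces is non-crossing.

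First I would check the \emph{refinement} claim: the interpolation rule in step~1 is chosen precisely so that inserting $(t_i',\temp{d}_{t_i'},trend)$ leaves the value at every point of the original piece unchanged --- $\temp{d}_{t_i'}=\temp{d}_{t_i}$ on a flat piece, and $\temp{d}_{t_i'}=\temp{d}_{t_i}-(t_i'-t_i)$ matches the affine extrapolation on a slope piece --- and step~2 applies the same rule. Hence both steps merely subdivide pieces, preserving their trends. For \emph{alignment}, after step~1 each table carries a breakpoint at every date that was originally a breakpoint of either table (its own are untouched, and every foreign index date is inserted by construction), plus possibly $0$; so the two tables are partitioned by the same finite set of dates and their pieces are pairwise aligned. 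Step~2 only ever splits a piece of one table \emph{together with} the aligned piece of the other, at the common date $t+(\temp{d}_{s_t}-\temp{d}_{f_t})$, so the common partition --- hence alignment --- is preserved.

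The core of the argument is \emph{non-crossing}. I would fix an aligned pair over an interval $[t,t+l)$ and split into cases on the trend pair. If both pieces are flat, or both are slopes, they are parallel affine functions, so one remains $\ge$ the other throughout, and step~2 (which only touches flat/slope pairs) does nothing. If one is flat at value $\temp{d}_{f_t}$ and the other a slope taking value $\temp{d}_{s_t}-(t'-t)$ at time $t'$, the two graphs meet exactly at $t'-t=\temp{d}_{s_t}-\temp{d}_{f_t}$: when $\temp{d}_{s_t}-\temp{d}_{f_t}\le 0$ the slope starts at or below $\temp{d}_{f_t}$ and only decreases, so the flat piece dominates on all of $[t,t+l)$; when $\temp{d}_{s_t}-\temp{d}_{f_t}\ge l$ the slope value stays strictly above $\temp{d}_{f_t}$ throughout; in both extremes the pair is already non-crossing and the guard $0<\temp{d}_{s_t}-\temp{d}_{f_t}<l$ correctly declines to act. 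In the remaining case $0<\temp{d}_{s_t}-\temp{d}_{f_t}<l$, step~2 splits both pieces at $t+(\temp{d}_{s_t}-\temp{d}_{f_t})$; on the left sub-interval the slope value is $>\temp{d}_{f_t}$ and on the right sub-interval it is $\le\temp{d}_{f_t}$, so each resulting aligned sub-pair is non-crossing, and by the refinement claim these sub-pieces are genuine flat/slope pieces of the original two slopes, so no new case can arise.

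I expect the only delicate point --- the main obstacle --- to be the \emph{global} bookkeeping for step~2: one must argue it is applied to every flat/slope aligned pair, that it inserts the same new breakpoint into both tables (so alignment survives, as noted), and that it cannot re-introduce a crossing elsewhere, which follows because all other aligned pairs are flat/flat, slope/slope, or already-resolved flat/slope pairs, none of which it modifies; termination is immediate since only finitely many pairs can require a single split each. Everything else reduces to routine affine arithmetic.
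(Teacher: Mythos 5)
Your proof is correct and follows essentially the same route as the paper: mutual splitting at foreign index dates yields alignment, flat/flat and slope/slope pairs are parallel and hence cannot cross, and a flat/slope pair crosses at most once, exactly when $0 < \temp{d}_{s_t}-\temp{d}_{f_t} < l$, which the split at $t+(\temp{d}_{s_t}-\temp{d}_{f_t})$ resolves. Your additional bookkeeping (splits refine without changing the represented function, step~2 splits both tables at the same date so alignment survives) only makes explicit what the paper leaves implicit.
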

\begin{proof}
  First of all, the fact that a split preserves the consistency of a table is clear from the formulas in step 1 ({\it i.e.,} using the same distance value if the segment is flat; decreasing it otherwise by an amount equal to the time elapsed since the beginning of the segment). Since each table undergoes a split with respect to every index date that exists only in the other table, the resulting tables must contain the same index dates (aligned segments). As for the crossing segments, let us first observe that two flat segments cannot cross since their value is a constant, neither can two slope segments because their value decrease at a same rate (the rate of time). Therefore, crossings may only occur between segments of different trends. Let $\delta=\temp{d}_{s_t} - \temp{d}_{f_t}$ be the difference of initial value between the slope segment and the flat segment. The segments cannot cross if the slope segment is already below the flat segment at the beginning of the interval (a slope is always decreasing), neither can they if $\delta$ is larger than $l$ because the final value of the slope segment will still be above that of the flat segment. Therefore, crossings only occur if $0 < \delta < l$. Besides, both types of segment trend being linear, two given segments cannot cross more than once. It is therefore sufficient to split them with respect to their crossing point $t+\delta$.\qed
\end{proof}
Figure~\ref{fig:tables} illustrates the aggregation of two distance tables that relate again to the same example, {\it i.e.,} $\temp{d}_{a,t}(b)$ and $\temp{d}_{a,t}(c)$ in the triangle {\small TVG} of Figure~\ref{fig:triangle}. Note that no situation of crossing segments occurred to be handled.
\begin{figure}[h]
  \centering
  \begin{tikzpicture}
    \path (0,0) node[below, font=\small, text width=5cm] (b){
      \begin{tabular}{|c|c|c|}
        \hline
        ED&Dist&Type\\\hline
        0&1&flat\\
        29&2&flat\\
        38&33&slope\\
        59&42&slope\\\hline
      \end{tabular}
    };
    \path (b.north) node[below=15pt]{~~~~$\oplus$~~};
    \path (b.north east) node[right=20pt, below, font=\small, text width=5cm] (c){
      \begin{tabular}{|c|c|@{~}c@{~}|}
        \hline
        ED&Dist&Type\\\hline
        \it (0)&\it (11)&\it (slope)\\
        9&2&flat\\
        19&2&slope\\
        20&1&flat\\
        59&52&slope\\\hline
      \end{tabular}
    };
    \path (c.north) node[below=20pt,right=5pt]{$=$};
    \path (c.north east) node[right=0pt, below, font=\small, text width=3cm] (a){
      \begin{tabular}{|c|c|c|}
        \hline
        ED&Ecc&Type\\\hline
        0&11&slope\\
        9&2&flat\\
        19&2&slope\\
        20&1&flat\\
        29&2&flat\\
        38&33&slope\\
        59&52&slope\\\hline
      \end{tabular}
    };
  \end{tikzpicture}
\caption{\label{fig:tables}Aggregation of two distance tables (corresp. to $\temp{d}_{a,t}(b)$ and $\temp{d}_{a,t}(c)$ in Fig.~\ref{fig:triangle}).}
\end{figure}
Based on the final aggregation, the emitter can decide when to initiate the intended fastest broadcast --~here, anytime between dates $20$ (inclusive) and date $29$ (exclusive) modulo $p$.

\section{Concluding Remarks and Open Problems}
The problem of computing fastest journeys in delay-tolerant networks was defined and solved in~\cite{BFJ03} in its combinatorial variant ({\it i.e.,} by a centralized algorithm that has full knowledge of the network schedule). We formulated this problem in a distributed setting, and solved it in the general case that contacts have arbitrary durations and can possibly overlap with each other. This case is more complex than the case with punctual contacts, because both direct and indirect journeys can co-exist in the network. Using the \tclocks abstraction from~\cite{CFMS11}, we showed how nodes could learn their own temporal eccentricity in such a context in the particular case of periodically-varying graph. Based on this information, the task of building fastest broadcast trees reduces to that of building a {\em foremost} broadcast tree relative to (any of) the date of minimum temporal eccentricity.

The purpose of this paper was primarily to demonstrate the feasibility of this problem. As far as complexity is concerned, most of the communication cost is encapsulated at the level of \tclocks. Characterizing and improving this cost thus requires to characterize and improve that of \tclocks. However the complexity of \tclocks is still unknown. This open problem is regarded as a relevant research avenue, all the more that \tclocks is used as a building block to solve various concrete problems. 
Contrary to most distributed algorithms for static networks, the complexity of an algorithm in the context of interest does not only depend on the number of nodes and edges, but is strongly dependent on the number of {\em topological events} during the execution (in fact, a vast majority of communications and computations are precisely triggered by these events). It seems therefore reasonable to start by looking at the way various network schedules could impact complexity.


\end{document}